\documentclass[singlecolumn,aps,nofootinbib,superscriptaddress,tightenlines,10pt]{revtex4}
\usepackage{amsmath}
\usepackage{amsthm}
\usepackage{amsfonts}
\usepackage{color}
\usepackage{mathtools}
\usepackage{overpic}

\usepackage[ colorlinks = true,
             linkcolor = blue,
             urlcolor  = blue,
             citecolor = red,
             anchorcolor = green,
]{hyperref}

\newtheorem{lemma}{Lemma}
\newtheorem{prop}[lemma]{Proposition}
\newtheorem{theorem}[lemma]{Theorem}
\newtheorem{definition}[lemma]{Definition}

\DeclareMathOperator{\tr}{tr}

\newcommand{\cS}{\mathcal{S}}
\newcommand{\cL}{\mathcal{L}}

\newcommand{\cH}{\mathcal{H}}
\newcommand{\cE}{\mathcal{E}}
\newcommand{\eps}{\varepsilon}

\begin{document}

\title{The Fidelity of Recovery is Multiplicative}

\author{Mario Berta}
\affiliation{Institute for Quantum Information and Matter, Caltech, Pasadena, California 91125, USA}
\author{Marco Tomamichel}
\affiliation{School of Physics, University of Sydney, Sydney, NSW 2006, Australia}

\begin{abstract}
Fawzi and Renner [Commun.~Math.~Phys.~340(2):575, 2015] recently established a lower bound on the conditional quantum mutual information (CQMI) of tripartite quantum states $\rho_{ABC}$ in terms of the fidelity of recovery (FoR), i.e.\ the maximal fidelity of the state $\rho_{ABC}$ with a state reconstructed from its marginal $\rho_{BC}$ by acting only on the $C$ system. The FoR measures quantum correlations by the local recoverability of global states and has many properties similar to the CQMI. Here we generalize the FoR and show that the resulting measure is multiplicative by utilizing semi-definite programming duality. This allows us to simplify an operational proof by Brand{\~a}o {\it et al.}~[Phys.~Rev.~Lett.~115(5):050501, 2015] of the above-mentioned lower bound that is based on quantum state redistribution. In particular, in contrast to the previous approaches, our proof does not rely on de Finetti reductions.
\end{abstract}

\maketitle


\section{Introduction}

The conditional quantum mutual information (CQMI) of a tripartite quantum state $\rho_{ABC}$ is defined as
\begin{align}
I(A\!:\!B|C)_{\rho} &:= H(AC)_{\rho} + H(BC)_{\rho}- H(ABC)_{\rho} - H(C)_{\rho}\,,
\end{align}
where $H(X)_{\rho} := -\tr[ \rho_{X} \log \rho_{X} ]$ denotes the von Neumann entropy. The CQMI is a measure for the correlations between $A$ and $B$ from the perspective of $C$ and has an operational interpretation as the optimal quantum communication cost in quantum state redistribution~\cite{devetak08,devetak09}. Apart from that the CQMI has found manifold applications in information theory~\cite{christandl04,brandao11}, physics~\cite{poulin11,kimthesis,kim14,kim14b}, as well as computer science~\cite{braverman12,brandao13b,brandao13c,touchette14}. 

A celebrated result known as strong subadditivity of entropy states that the CQMI is always non-negative~\cite{lieb73},
\begin{align}\label{eq:SSA}
I(A\!:\!B|C)_{\rho}\geq0\,.
\end{align}
Following a line of works (see~\cite{kimthesis,zhang13,bertawilde14,liwinter14} and references therein), Fawzi and Renner have shown in a recent breakthrough result that the lower bound~\eqref{eq:SSA} can be improved to~\cite{fawzirenner14},
\begin{align}\label{eq:fawzirenner}
I(A\!:\!B|C)_{\rho}\geq-\log F(A;B|C)_{\rho}\,,
\end{align}
where we have the fidelity of recovery (FoR),
\begin{align}\label{eq:fidrec}
F(A;B|C)_{\rho} := \max_{\Gamma_{C\to AC} } F\big(\rho_{ABC}, \left(\mathcal{I}_{B}\otimes\Gamma_{C\to AC}\right)(\rho_{BC}) \big) \,. 
\end{align}
Here, $\mathcal{I}_{B}$ denotes the identity channel on $B$, the supremum is taken over all quantum channels (completely positive and trace-preserving (CPTP) maps) from $C$ to $AC$, and we use $F(\cdot,\cdot)$ to denote Uhlmann's fidelity~\cite{uhlmann85}: $F(\rho, \sigma) := (\tr \big|\sqrt{\rho}\sqrt{\sigma}\big|)^2$. The FoR was defined and explored in detail by Seshadreesan and Wilde in~\cite{seshadreesan14}, where they show that it has similar properties as the CQMI. For example for pure states $\sigma_{ABCD}$ the monogamy of entanglement implies the duality
\begin{align}
I(A;B|C)_{\sigma}=I(A;B|D)_{\sigma}\quad\text{as well as}\quad F(A;B|C)_{\sigma}=F(A;B|D)_{\sigma}\,.
\end{align}
Similar to the squashed entanglement which is an entanglement measure based on the CQMI~\cite{christandl04}, the FoR then serves as the basis for a (pseudo) entanglement measure: the geometric squashed entanglement~\cite{seshadreesan14}. By its definition~\eqref{eq:fidrec} the FoR is also connected to the local recoverability of global quantum states, a promising concept for understanding topological order in condensed matter systems~\cite{kimthesis,kim14,kim14b}.

In this paper we investigate and generalize the FoR, and give an information theoretic proof of the lower bound in~\eqref{eq:fawzirenner}. For any two bipartite states $\rho_{AB}$ and $\sigma_{AC}$ (that may or may not have the same marginal on $A$) we define the generalized fidelity of recovery (FoR) as
\begin{align}\label{eq:def2}
F_{C\to B}(\rho_{AB}\|\sigma_{AC}) := \max_{\Gamma_{C\to B}} F\big( \rho_{AB} , \Gamma_{C\to B}(\sigma_{AC}) \big)\,.
\end{align}
We will drop the subscript $C \to B$ in the following when it is evident from the context on which systems the maps act. The original FoR as in~\eqref{eq:fidrec} is then simply given as $F_{C\to AC}(\rho_{ABC}\|\rho_{BC})=F(A;B|C)_{\rho}$. We note that the generalized form~\eqref{eq:def2} gives a way of comparing quantum information that lives on different dimensional systems. In particular, this allows to study quantum correlations independent of the system used to represent them.

Our results about the FoR are then as follows. We study a semi-definite programming (SDP) formulation of the FoR and find the dual minimization problem (Section~\ref{sec:sdp}).\footnote{SDPs are a powerful tool in quantum information theory with many applications (see~\cite{watrous-ln11} for an introduction).} Based on this dual SDP formulation, we establish our main technical result and show that the FoR is multiplicative for product states (Section~\ref{sec:mult}). Thus, we find that in particular for any two states $\rho_{ABC}$ and $\tau_{A'B'C'}$,
\begin{align}\label{eq:multi_main}
F(AA';BB'|CC')_{\rho\otimes\tau}= F(A;B|C)_{\rho} \cdot F(A';B'|C')_{\tau} \,.
\end{align}
This implies that there exists an optimal recovery map that has product structure as well. Additivity (multiplicativity) results are at the heart of quantum information theory, and using our finding~\eqref{eq:multi_main} we provide an operational proof of the Fawzi-Renner lower bound~\eqref{eq:fawzirenner} (Section~\ref{sec:fawzirenner}). This proof utilizes a connection between the fidelity of recovery and one-shot quantum state redistribution, using the ideas of Brand{\~a}o {\it et al.}~\cite{brandao14}.


\section{Fidelity of Recovery as an SDP}\label{sec:sdp}

In the following we denote the set of quantum states on a finite-dimensional Hilbert space $A$ by $\cS(A)$ and consequently use $\cS(ABC)$ to denote states on a tripartite quantum system $ABC$. We use subscripts to indicate on which Hilbert spaces an operator acts. The dimension of $A$ is denoted by $d_A$.

First, note that if $\rho_{AB}=\Phi_{AB}$ is a (normalized) maximally entangled state (MES) with $d_B=d_A$ then we get for the FoR~\eqref{eq:def2} by standard SDP duality the conditional min-entropy~\cite[Theorem 2]{koenig08},
\begin{align}\label{eq:min_mes}
F(\Phi_{AB}\| \sigma_{AC})=\frac{1}{d_A}2^{-H_{\min}(A|C)_{\sigma}}=\frac{1}{d_A}\min_{\omega_{C}\in\mathcal{S}(C)} \Big\| \sqrt{\omega_C^{-1}} \sigma_{AC} \sqrt{\omega_C^{-1}} \Big\|\,,
\end{align}
where $\|\cdot\|$ denotes the operator norm. In general we do not know much about the optimal $\omega_C$ in this expression except that for tensor-product states it is of tensor-product form as well. That is, the conditional min-entropy is additive~\cite{koenig08}. Second, if $\rho_{AB}=\psi_{AB}$ is a pure state with $d_B=d_A$ then we get by standard SDP duality~\cite[Remark 1]{koenig08},
\begin{align}\label{eq:min_pure}
F\big(\psi_{AB} \big\| \sigma_{AC}\big)= \min_{\omega_{C}\in\mathcal{S}(C)} \Big\| \sqrt{\psi_A \otimes\omega_C^{-1}}\, \sigma_{AC} \sqrt{\psi_A \otimes\omega_C^{-1}} \Big\|\,,
\end{align}
This quantity was also studied by Barnum and Knill in the context of quantum error correction~\cite{barnum02}. The fidelity of recovery can be formulated as an SDP in general.

\begin{lemma}\label{lem:alberti_recovery}
  Let $\rho_{AB} \in \cS(AB)$ and $\sigma_{AC} \in \cS(AC)$ and let $\sigma_{ACD}$ be a purification of $\sigma_{AC}$. Then, $F_{C \to B}(\rho_{AB}\|\sigma_{AC})$ as in~\eqref{eq:def2} is the solution to the following minimization problem:
\begin{align}\label{eq:sdp}
  \begin{array}{rl}
  \textnormal{minimize}:\ & \tr \big[\rho_{AB} R_{AB}^{-1} \big] \cdot \tr \big[\sigma_{AD} Q_{AD} \big] \\[5pt]
  \textnormal{subject to}:\ & R_{AB} > 0,\ Q_{AD} > 0 \, \\
  & Q_{AD} \otimes 1_B \geq R_{AB} \otimes 1_D \,.
  \end{array} 
\end{align}
\end{lemma}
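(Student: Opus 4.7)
The plan is to prove both directions of the claimed identity separately: \emph{weak duality} (the inequality $\le$) by a short, direct chain using Alberti's theorem and the purification $\sigma_{ACD}$; and \emph{strong duality} (matching) via Lagrangian SDP duality.

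\emph{Weak duality.} Fix any CPTP map $\Gamma_{C\to B}$ and any dual-feasible pair $(R_{AB}, Q_{AD})$. By Alberti's theorem for the fidelity,
\[
F(\rho_{AB}, \Gamma(\sigma_{AC})) \leq \tr[\rho_{AB} R_{AB}^{-1}]\,\tr[\Gamma(\sigma_{AC})R_{AB}].
\]
Rewriting the second trace via the adjoint channel and the purification,
\[
\tr[\Gamma(\sigma_{AC})R_{AB}] = \tr[\sigma_{AC}\Gamma^*(R_{AB})] = \langle\sigma|_{ACD}\bigl(\Gamma^*(R_{AB})\otimes 1_D\bigr)|\sigma\rangle_{ACD}.
\]
Since $\Gamma^*_{B\to C}$ is completely positive and unital, applying it on the $B$ system of the feasibility constraint $Q_{AD}\otimes 1_B\geq R_{AB}\otimes 1_D$ (viewed on $ABD$) yields $Q_{AD}\otimes 1_C\geq \Gamma^*(R_{AB})\otimes 1_D$ on $ACD$. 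Pairing with the positive operator $\sigma_{ACD}$ then gives $\tr[\sigma_{AC}\Gamma^*(R_{AB})]\leq \tr[\sigma_{AD}Q_{AD}]$, whence $F(\rho,\Gamma(\sigma))\leq\tr[\rho R^{-1}]\tr[\sigma_{AD} Q]$; taking the supremum over $\Gamma$ and infimum over $(R,Q)$ completes the weak direction.

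\emph{Strong duality.} To establish the matching inequality, I would formulate $F_{C\to B}(\rho\|\sigma)$ as an explicit SDP and identify the lemma's problem as its Lagrangian dual. The key structural input is the state-extension identity
\[
\bigl\{\Gamma(\sigma_{AC}) : \Gamma_{C\to B}\text{ CPTP}\bigr\}=\bigl\{\tr_D\phi_{ABD} : \phi_{ABD}\geq 0,\ \phi_{AD}=\sigma_{AD}\bigr\},
\]
where the inclusion $\subseteq$ comes from $\phi:=(\mathcal{I}_{AD}\otimes\Gamma)(\sigma_{ACD})$ and the reverse inclusion from purifying $\phi_{ABD}$, noting that it shares the marginal $\sigma_{AD}$ with $\sigma_{ACD}$, and invoking Uhlmann's theorem to construct an isometry $V:C\to BE$ and the corresponding $\Gamma(\cdot)=\tr_E[V(\cdot)V^*]$. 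Combining this with Watrous's SDP for the root fidelity,
\[
\sqrt{F(\rho,\tau)} = \max\Bigl\{\tfrac12(\tr X+\tr X^*) : \begin{pmatrix}\rho & X \\ X^* & \tau\end{pmatrix}\geq 0\Bigr\},
\]
recasts $\sqrt{F_{C\to B}(\rho\|\sigma)}$ as a genuine linear SDP in $(X_{AB},\phi_{ABD})$. Standard Lagrangian dualization---with a PSD multiplier $Z=\begin{pmatrix}Z_{11}&Z_{12}\\Z_{12}^*&Z_{22}\end{pmatrix}$ for the block positivity and a Hermitian $Q_{AD}$ for the partial-trace equality---gives, after stationarity in $X$ (fixing $Z_{12}=-I/2$, hence the Schur bound $Z_{22}\geq Z_{11}^{-1}/4$), stationarity in $\phi$ ($Q\otimes 1_B\geq Z_{22}\otimes 1_D$), and the renaming $R_{AB}:=Z_{11}^{-1}$,
\[
\sqrt{F_{C\to B}(\rho\|\sigma)}=\inf_{R,Q>0}\bigl\{\tr[\rho R^{-1}]+\tfrac14\tr[\sigma_{AD}Q] : Q\otimes 1_B\geq R\otimes 1_D\bigr\}.
\]
The constraint is invariant under $(R,Q)\mapsto(\lambda R,\lambda Q)$, so optimizing over $\lambda>0$ via the scalar identity $\inf_{\lambda>0}(a/\lambda+\lambda b/4)=\sqrt{ab}$ converts the sum into $\sqrt{\tr[\rho R^{-1}]\tr[\sigma_{AD}Q]}$; squaring yields the lemma's formula. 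Strong duality is secured by Slater's condition: the dual is strictly feasible via $R=1_{AB}$, $Q=2\cdot 1_{AD}$.

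The main technical obstacle is the careful Lagrangian bookkeeping---tracking the $\tfrac12$ factor in Watrous's SDP through the Schur complement, and in particular recognizing that the marginal-constraint multiplier naturally produces $\sigma_{AD}$ (rather than $\sigma_{AC}$) in the dual objective. The closing AM-GM rescaling, though elementary, is what converts the SDP's sum-form dual into the product expression appearing in the lemma and relies crucially on the joint homogeneity of the feasibility constraint.
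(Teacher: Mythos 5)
Your proposal is correct, and its core engine is the same as the paper's: Watrous's SDP for the root fidelity, dualization with Slater's condition, and the final rescaling/AM--GM trick that converts the sum-form dual into the product form of the lemma. There are two genuine deviations, both worth noting. First, you parametrize the recovered states by extensions $\phi_{ABD}\geq 0$ with $\phi_{AD}=\sigma_{AD}$, whereas the paper parametrizes by Choi--Jamio{\l}kowski states $\tau_{ABD}$ with $\tr_B[\tau_{ABD}]=1_{AD}$ and writes $\Gamma(\sigma_{AC})=\tr_D\big[\sqrt{\sigma_{AD}}\,\tau_{ABD}\sqrt{\sigma_{AD}}\big]$. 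The two are equivalent (via $\phi_{ABD}=\sqrt{\sigma_{AD}}\,\tau_{ABD}\sqrt{\sigma_{AD}}$, modulo support technicalities that neither treatment belabors), but your choice makes $\sigma_{AD}$ enter the dual objective directly through the marginal-constraint multiplier, sparing the paper's later substitution $Q_{AD}\to\frac12\sqrt{\sigma_{AD}}Q_{AD}\sqrt{\sigma_{AD}}$; the price is that you must justify the reverse inclusion of your set identity (purify $\phi_{ABD}$, relate the two purifications of $\sigma_{AD}$ by an isometry $V:C\to BE$, choosing $E$ large enough that the partial isometry extends to a genuine isometry, and set $\Gamma=\tr_E[V(\cdot)V^{\dagger}]$), which the paper gets for free from the Choi correspondence. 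Second, you include a standalone weak-duality argument (Alberti's bound, passing to the unital CP adjoint $\Gamma^*$, applying $\mathcal{I}_{AD}\otimes\Gamma^*$ to the constraint $Q_{AD}\otimes 1_B\geq R_{AB}\otimes 1_D$, then pairing with the purification). The paper does not need this step, since strong duality delivers both inequalities at once, but it is a sound and self-contained verification of the dual's form, and it is precisely the mechanism---feasible dual pairs certify upper bounds on the fidelity against \emph{every} recovery map---that powers the multiplicativity argument in Proposition~\ref{lem:multi}. Finally, your coefficients $\tr[\rho_{AB} R_{AB}^{-1}]+\frac14\tr[\sigma_{AD}Q_{AD}]$ differ from the paper's $\frac12\tr[\rho_{AB} R_{AB}^{-1}]+\frac12\tr[\sigma_{AD}Q_{AD}]$, but this is immaterial: both constraint sets are invariant under $(R,Q)\mapsto(\lambda R,\lambda Q)$, and both collapse under the scalar optimization to the same product expression~\eqref{eq:sdp}.
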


The proof uses SDP duality following the footsteps of Watrous' lecture notes~\cite{watrous-ln11}. In particular Watrous discusses the dual SDP for the fidelity~\cite{watrous12} in the form of Alberti~\cite{alberti83},
\begin{align}\label{eq:alberti_original}
F(\rho,\sigma)=\min_{R>0}\tr \big[\rho R^{-1}\big] \cdot \tr \big[\sigma R\big]\,.
\end{align}
Our resulting dual program~\eqref{eq:sdp} can be thought of as the Alberti form of the fidelity of recovery (since it simplifies to~\eqref{eq:alberti_original} for trivial $B$ and $C$ systems).\footnote{Interestingly this Alberti form~\eqref{eq:sdp} does not directly simplify to~\eqref{eq:min_mes} and~\eqref{eq:min_pure} for the special case of MES and pure states.}

\begin{proof}[Proof of Lemma~\ref{lem:alberti_recovery}]
First note that using the purification $\sigma_{ACD}$, we can write
  \begin{align}
    \Gamma_{C\to B}(\sigma_{AC})= \tr_{D} \big[\Gamma_{C\to B}(\sigma_{ACD}) \big]= \tr_{D} \Big[\Gamma_{C \to B} \big( \sqrt{\sigma_{AD}} \Psi_{AD:C} \sqrt{\sigma_{AD}} \big) \Big]\,,
  \end{align}
  where we denote by $\Psi_{AD:C}$ the (unnormalized) maximally entangled state between $AD:C$ in the Schmidt decomposition of $\sigma_{ACD}$. Then, define the Choi-Jamio{\l}kowski state (unnormalized) of the map $\Gamma$ as
  \begin{align}
    \tau_{ABD} = \Gamma_{C \to B} \big( \Psi_{AD:C} \big) , \quad \tr_B[\tau_{ABD}] = 1_{AD}\,. \label{eq:choistate}
  \end{align}
  Hence, we can write
  \begin{align}
  \Gamma_{C\to B}(\sigma_{AC}) = \tr_D\big[\sqrt{\sigma_{AD}} \tau_{ABD} \sqrt{\sigma_{AD}}\big]\,,
  \end{align}
  and thus we can express the optimization problem for $\Gamma_{C \to B}$ in terms of the Choi-Jamio{\l}kowski state in~\eqref{eq:choistate}. On the other hand, every state $\tau_{ABD}$ satisfying~\eqref{eq:choistate} corresponds to a CPTP map $\Gamma_{C\to B}$. Hence, we can optimize over all Choi-Jamio{\l}kowski states of the form~\eqref{eq:choistate} instead of CPTP maps from $B$ to $C$. This leads to the following expression for the fidelity of recovery:
 \begin{align}\label{eq:choi_fidelity}
F(\rho_{AB}\| \sigma_{AC})= \max \Big\{  &F\Big(\rho_{AB}, \tr_D\big[\sqrt{\sigma_{AD}} \tau_{ABD} \sqrt{\sigma_{AD}}\big] \Big) :\;\tau_{ABD} \geq 0,\ \tr_{B}[\tau_{ABD}] = 1_{AD} \Big\}\,.
 \end{align}
 
 The primal problem above is obtained by considering an SDP for the root fidelity $\sqrt{F(\rho,\sigma)}=\|\sqrt{\rho}\sqrt{\sigma}\|_{1}$ as in~\cite[Section 2.1]{watrous12} and~\cite{killoranthesis}. The square root of the fidelity of recovery in~\eqref{eq:choi_fidelity} is then written as
\begin{align}
  \begin{array}{rl}
    \textnormal{maximize}:\ & \frac12 \tr\big[Z_{AB} + Z_{AB}^{\dag}\big] \\[5pt]
    \textnormal{subject to}:\ & \tau_{ABD} \geq 0,\ Z_{AB} \in \cL(AB), \\
    & \tr_B [\tau_{ABD}] = 1_{AD}, \\[1pt]
    & \begin{pmatrix} \rho_{AB} & Z_{AB} \\ Z_{AB}^{\dag} & \tr_D\big[\sqrt{\sigma_{AD}} \tau_{ABD} \sqrt{\sigma_{AD}}\big] \end{pmatrix} \geq 0 \,,
    \end{array} 
    \label{eq:primal}
\end{align}
where $\cL$ denotes the set of linear operators. In the next step we bring this program into standard form. We want to write the primal problem as a maximization over $X \geq 0$ of the functional $\tr[X A]$ subject to $\Phi(X) = B$. Hence, we set
 \begin{align}
     X = \left( \begin{matrix} X_{11} & Z_{AB} & \cdot \\ Z_{AB}^{\dag} & X_{22} & \cdot \\ \cdot & \cdot & \tau_{ABD} \end{matrix} \right) , \quad 
     A = \frac12 \begin{pmatrix} 0 & 1_{AB} & 0 \\ 1_{AB} & 0 & 0 \\ 0 & 0 & 0 \end{pmatrix}, \quad B = \begin{pmatrix} \rho_{AB} & 0 & 0 \\ 0 & 0 & 0 \\ 0 & 0 & 1_{AD} \end{pmatrix}\,,
\end{align}
as well as
\begin{align}
     \Phi(X)=\begin{pmatrix} X_{11} & 0 & 0 \\ 0 & X_{22} - \tr_D\big[\sqrt{\sigma_{AD}} \tau_{ABD} \sqrt{\sigma_{AD}}\big]  & 0 \\ 0 & 0 & \tr_{B} [\tau_{ABD}] \end{pmatrix}\,.
\end{align}     
The variables with the placeholder `$\cdot$' are of no interest to us.
  The dual SDP is a minimization over self-adjoint $Y$ of the functional $\tr[Y B]$ subject to $\Phi^{\dag}(Y) \geq A$. The dual variables and adjoint map can be determined to be
 \begin{align}
     Y = \left( \begin{matrix} L_{AB} & \cdot & \cdot \\ \cdot & R_{AB} & \cdot \\ \cdot & \cdot & Q_{AD} \end{matrix} \right)
 \end{align}
with
\begin{align}
\Phi(Y) =\begin{pmatrix}  L_{AB} & 0 & 0 \\ 0 & R_{AB} & 0 \\ 0 & 0 & -\sqrt{\sigma_{AD}} \big( R_{AB} \otimes 1_D \big) \sqrt{\sigma_{AD}} + 1_B \otimes Q_{AD}\end{pmatrix}\,.
\end{align}
This leads to the following dual problem:
\begin{align}\label{eq:dual_first}
  \begin{array}{rl}
  \textnormal{minimize}:\ &  \tr [\rho_{AB} L_{AB} ] + \tr [Q_{AD} ]\\[5pt]
  \textnormal{subject to}:\ & L_{AB}, R_{AB} \in \cH(AB),\ Q_{AD} \in \cH(AD), \, \\
  & Q_{AD}\otimes 1_B  \geq \sqrt{\sigma_{AD}} (R_{AB} \otimes 1_D) \sqrt{\sigma_{AD}} \,, \\[1pt]
  & \begin{pmatrix} L_{AB} & 0 \\ 0 & R_{AB} \end{pmatrix}\geq \frac12 \begin{pmatrix} 0 & 1_{AB} \\ 1_{AB} & 0 \end{pmatrix} \,,
  \end{array} 
\end{align}
where $\cH$ denotes the set of self-adjoint operators. The Slater condition (cf.~\cite{watrous-ln11}) for strong duality is satisfied. The program~\eqref{eq:dual_first} can be simplified further by the substitutions $L_{AB} \to \frac12 L_{AB}$, $R_{AB} \to \frac12 R_{AB}$ and $Q_{AD} \to \frac12 \sqrt{\sigma_{AD}} Q_{AD} \sqrt{\sigma_{AD}}$, leaving us with
\begin{align}
  \begin{array}{rl}
  \textnormal{minimize}:\ & \frac12 \tr \big[\rho_{AB} L_{AB} \big] + \frac12 \tr \big[\sigma_{AD} Q_{AD} \big]\\[5pt]
  \textnormal{subject to}:\ & L_{AB}, R_{AB} \in \cH(AB),\ Q_{AD} \in \cH(AD), \, \\
  & Q_{AD}\otimes 1_B \geq R_{AB} \otimes 1_D \,, \\[1pt]
  & \begin{pmatrix} L_{AB} & -1_{AB} \\ -1_{AB} & R_{AB} \end{pmatrix}\geq 0 \,.
  \end{array} 
\end{align}
Now we note that the above matrix inequality holds if and only if $L_{AB}, R_{AB} \geq 0$ and $R_{AB} \geq L_{AB}^{-1}$. Without loss of generality we can choose $R_{AB} = L_{AB}^{-1}$, and our problem simplifies to
\begin{align}
  \begin{array}{rl}
  \textnormal{minimize}:\ & \frac12 \tr \big[\rho_{AB} R_{AB}^{-1} \big] + \frac12 \tr \big[\sigma_{AD} Q_{AD} \big]\\[5pt]
  \textnormal{subject to}:\ & R_{AB}, Q_{AD} \geq 0, \, \\
  & Q_{AD}\otimes 1_B \geq R_{AB} \otimes 1_D \,.
  \end{array} 
\end{align}

Finally, we follow the argument leading Watrous to Alberti's expression for the fidelity~\cite[Lecture 8]{watrous-ln11}. We first remark that 
\begin{align}
\frac12 \tr \big[\rho_{AB} R_{AB}^{-1} \big] + \frac12 \tr \big[\sigma_{AD} Q_{AD} \big]\geq \sqrt{ \tr \big[\rho_{AB} R_{AB}^{-1} \big] \cdot \tr \big[\sigma_{AD} Q_{AD} \big]}
\end{align}
by the arithmetic--geometric mean inequality, with equality when the two terms are equal. However, it is easy to see that for any feasible pair $(R_{AB}, Q_{AD})$, there exists a constant $\lambda \in \mathbb{R}$ such that two trace terms evaluated for $(\lambda R_{AB}, \lambda Q_{AD})$ are equal (and clearly $(\lambda R_{AB}, \lambda Q_{AD})$ is also feasible). Hence, restricting our optimization to such rescaled pairs of operators, and going from the root fidelity to the fidelity again we find that
\begin{align}
F(\rho_{AB}\| \sigma_{AC})= \min_{\scriptstyle R_{AB} > 0, Q_{AD} > 0 \atop \scriptstyle Q_{AD}\otimes 1_B \geq R_{AB} \otimes 1_D } \tr \big[\rho_{AB} R_{AB}^{-1} \big] \cdot \tr \big[\sigma_{AD} Q_{AD} \big]\,.  \label{eq:alberti}
\end{align}
This concludes the proof.
\end{proof}


\section{Fidelity of Recovery is Multiplicative}\label{sec:mult}

As a direct consequence of this formulation of the problem we see that the fidelity of recovery is multiplicative.

\begin{prop}\label{lem:multi}
  For any $\rho_{AB} \in \cS(AB)$, $\tau_{A'B'} \in \cS(A'B')$, $\sigma_{AC} \in \cS(AC)$ and $\omega_{A'C'} \in \cS(A'C')$, we have
  \begin{align}
      F( \rho_{AB} \otimes \tau_{A'B'} \| \sigma_{AC} \otimes \omega_{A'C'})= F( \rho_{AB} \| \sigma_{AC}) \cdot F( \tau_{A'B'} \| \omega_{A'C'})\,.
  \end{align}
\end{prop}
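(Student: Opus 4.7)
My plan is to prove the two inequalities separately, using the primal definition~\eqref{eq:def2} for one direction and the Alberti-type dual SDP of Lemma~\ref{lem:alberti_recovery} for the other.

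For the super-multiplicative bound $F(\rho_{AB}\otimes\tau_{A'B'}\|\sigma_{AC}\otimes\omega_{A'C'})\geq F(\rho_{AB}\|\sigma_{AC})\cdot F(\tau_{A'B'}\|\omega_{A'C'})$, I would take optimal (or $\eps$-optimal) recovery maps $\Gamma_{C\to B}$ and $\Gamma'_{C'\to B'}$ in the two single-system primal problems, form the product channel $\Gamma\otimes\Gamma'$ from $CC'$ to $BB'$, plug it into the joint maximisation~\eqref{eq:def2}, and invoke multiplicativity of Uhlmann's fidelity under tensor products.

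For the matching upper bound I would work with the dual SDP from Lemma~\ref{lem:alberti_recovery}. Given feasible pairs $(R_{AB},Q_{AD})$ and $(R'_{A'B'},Q'_{A'D'})$ for the two individual dual problems (with purifications $\sigma_{ACD}$ and $\omega_{A'C'D'}$), I form the tensor-product pair $\tilde R:=R_{AB}\otimes R'_{A'B'}>0$ and $\tilde Q:=Q_{AD}\otimes Q'_{A'D'}>0$, viewed as a candidate for the dual of the joint problem with purification $\sigma_{ACD}\otimes\omega_{A'C'D'}$. The only nontrivial feasibility condition, $\tilde Q\otimes 1_{BB'}\geq \tilde R\otimes 1_{DD'}$, follows from the elementary fact that $0\leq X\leq Y$ and $0\leq X'\leq Y'$ imply $X\otimes X'\leq Y\otimes Y'$ (via $Y\otimes Y' - X\otimes X' = Y\otimes(Y'-X')+(Y-X)\otimes X'\geq 0$), applied to $X=R\otimes 1_D$, $Y=Q\otimes 1_B$ and the primed analogues, after a harmless relabelling of tensor factors. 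The objective of the product dual then factorises as
\begin{align}
\tr\bigl[(\rho_{AB}\otimes\tau_{A'B'})\tilde R^{-1}\bigr]\cdot \tr\bigl[(\sigma_{AD}\otimes\omega_{A'D'})\tilde Q\bigr] = \bigl(\tr[\rho_{AB} R_{AB}^{-1}]\cdot \tr[\sigma_{AD} Q_{AD}]\bigr)\cdot\bigl(\tr[\tau_{A'B'} R_{A'B'}^{\prime\,-1}]\cdot \tr[\omega_{A'D'} Q'_{A'D'}]\bigr).
\end{align}
Minimising independently over the two factors and using weak duality gives the desired inequality $F(\rho\otimes\tau\|\sigma\otimes\omega)\leq F(\rho\|\sigma)\cdot F(\tau\|\omega)$.

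The whole proof really lives on the dual side: a priori it is not obvious in the primal whether the optimal joint recovery channel can be chosen of product form, but the Alberti-type dual has a product-friendly objective and a single tensor-product constraint, so product feasible points automatically achieve the product of the two individual minima. I do not anticipate any serious obstacle; the only bookkeeping is the reshuffling of tensor factors when lifting the two individual constraints to $AA'BB'DD'$, together with a trivial $\eps$-approximation if optima are attained only in the limit of strict positivity.
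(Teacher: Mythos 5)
Your proposal is correct and takes essentially the same route as the paper's own proof: super-multiplicativity via product recovery maps in the primal~\eqref{eq:def2}, and the converse via tensor-product feasible pairs in the Alberti-type dual of Lemma~\ref{lem:alberti_recovery}, whose feasibility the paper establishes in~\eqref{eq:some-people-dont-see-this}. Your algebraic identity $Y\otimes Y' - X\otimes X' = Y\otimes(Y'-X')+(Y-X)\otimes X'$ is just a repackaging of the paper's two-fold use of the fact that $A \geq B \implies M \otimes A \geq M \otimes B$ for $M \geq 0$, so there is no substantive difference.
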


\begin{proof}
From the definition in~\eqref{eq:def2} it is evident that if we restrict to recovery maps that have a product structure, we immediately find
\begin{align}
F( \rho_{AB} \otimes \tau_{A'B'} \| \sigma_{AC} \otimes \omega_{A'C'})\geq F( \rho_{AB} \| \sigma_{AC}) \cdot F( \tau_{A'B'} \| \omega_{A'C'}) \,.
\end{align}
To establish the equality, we take a closer look at~\eqref{eq:alberti}. Here we simply note the following. For every two pairs of feasible operators $(R_{AB}, Q_{AD})$ and $(R_{A'B'}, Q_{A'D'})$ for $F( \rho_{AB} \| \sigma_{AC})$ and $F( \tau_{A'B'} \| \omega_{A'C'})$, respectively, we have
\begin{align}
  Q_{AD}\otimes 1_B \geq R_{AB} \otimes 1_D \ &\land \ Q_{A'D'}\otimes 1_{B'} \geq R_{A'B'} \otimes 1_{D'} \nonumber\\
  &\implies\nonumber\\
  Q_{AD} \otimes Q_{A'D'} \otimes 1_{BB'}&\geq R_{AB} \otimes R_{A'B'} \otimes 1_{DD'} \,. \label{eq:some-people-dont-see-this}
\end{align}
To establish~\eqref{eq:some-people-dont-see-this} we used twice that $A \geq B \implies M \otimes A \geq M \otimes B$ for $M \geq 0$, which holds since taking the tensor product with $M$ is a positive map. Hence, $\big( R_{AB} \otimes R_{A'B'}, Q_{AD} \otimes Q_{A'D'} \big)$ is a feasible pair for $F( \rho_{AB} \otimes \tau_{A'B'} \| \sigma_{AC} \otimes \omega_{A'C'})$, and, thus, 
\begin{align}
F( \rho_{AB} \otimes \tau_{A'B'} \| \sigma_{AC} \otimes \omega_{A'C'})\leq \tr \big[\rho_{AB} R_{AB}^{-1} \big]\cdot\tr \big[\tau_{A'B'} R_{A'B'}^{-1} \big]\cdot \tr \big[\sigma_{AD} Q_{AD} \big] \cdot \tr \big[\omega_{A'D'} Q_{A'D'} \big] \,.
\end{align}
Since this holds for all feasible operators, we conclude that
\begin{align}
F( \rho_{AB} \otimes \tau_{A'B'} \| \sigma_{AC} \otimes \omega_{A'C'})\leq F( \rho_{AB} \| \sigma_{AC}) \cdot F( \tau_{A'B'} \| \omega_{A'C'}) \,.
\end{align}
\end{proof}


\section{The Fawzi-Renner Bound without de Finetti Reductions}\label{sec:fawzirenner}

Brand{\~a}o {\it et al.}~\cite{brandao14} show that the Fawzi-Renner lower bound~\eqref{eq:fawzirenner} can be deduced from the operational interpretation of the CQMI as (twice) the quantum communication cost in quantum state redistribution~\cite{devetak08,devetak09}. We simplify their proof and in particular get rid of the continuity and representation theoretic arguments (de Finetti reductions). Instead we leverage on the multiplicativity of the fidelity of recovery (Proposition~\ref{lem:multi}). We first give a precise definition for the information task of quantum state redistribution~\cite{devetak08,devetak09} (also see Figure~\ref{fig:redistribution} for an intuitive description).

\begin{figure}[ht]
  \begin{overpic}[width=0.7\textwidth]{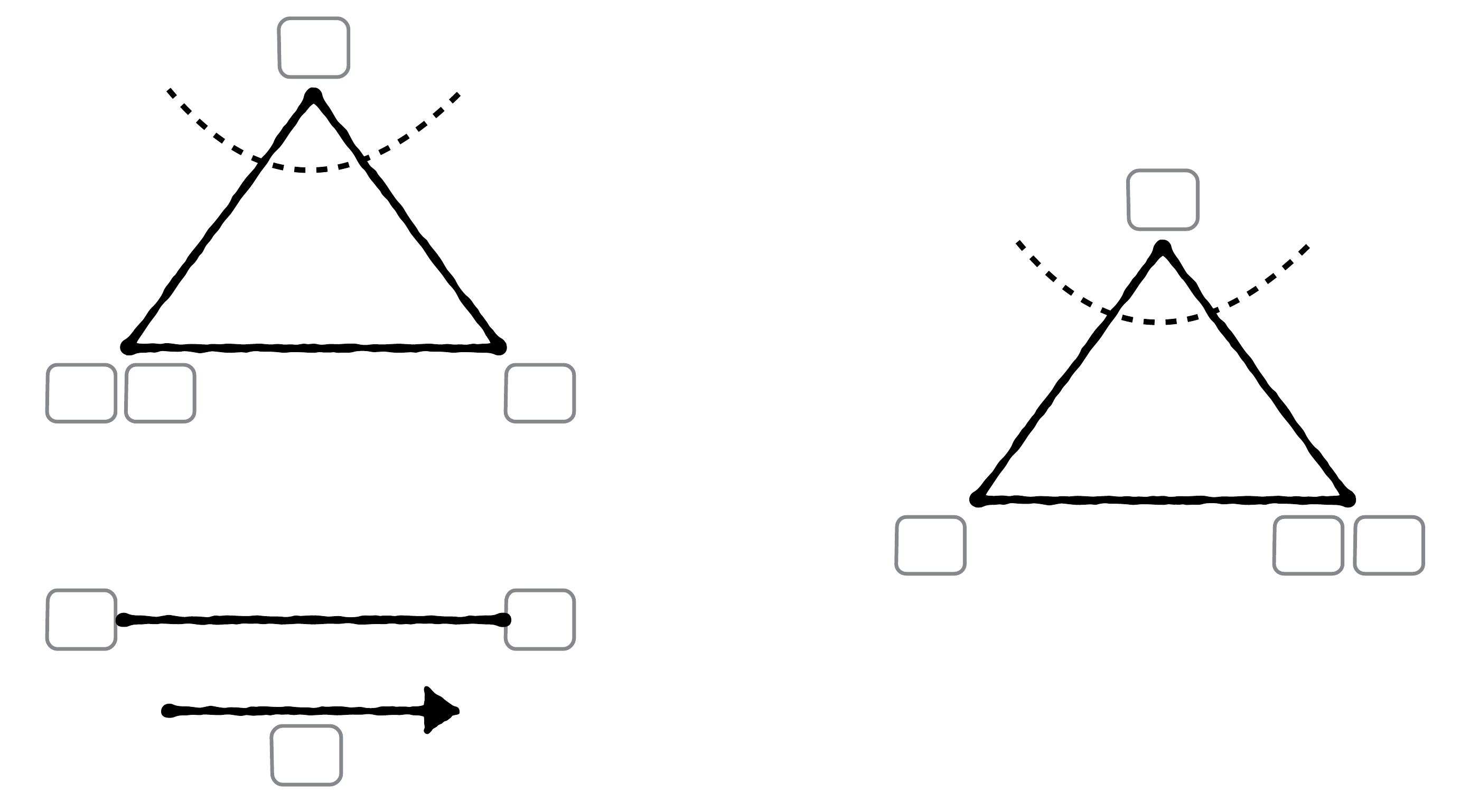}
    \put(19.3,51.2){ $B$}
    \put(3.5,27.9){ $R$}
    \put(9,27.9){ $A$}
    \put(34.9,27.9){ $C$}

    \put(18.9,3.3){ $Q$}
    \put(3.5,12.3){ $T_A$}
    \put(34.5,12.3){ $T_B$}
    
    \put(77.3,40.7){ $B$}
    \put(61.5,17.4){ $R$}
    \put(86.9,17.4){ $A'$}
    \put(92.9,17.4){ $C$}

    \put(19.5,20){\huge $+$}
    \put(47,28){\huge $\longrightarrow$}
  \end{overpic}
  \caption{Quantum state redistribution scheme. Starting from a four party pure state $\rho_{ABCR}$ and arbitrary entanglement assistance $T_AT_B$, the objective of quantum state redistribution is to transfer $A$ without altering the joint state. (See Definition~\ref{def:sm}). This is achieved by means of a local encoding operation $\mathcal{E}_{ART_A\to QR}$, transferring the system $Q$, and a local decoding operation $\mathcal{D}_{CT_{B}Q\to A'C}$. One is then interested in the trade-off between the fidelity of the scheme versus the quantum communication cost $q=\log d_Q$.}
  \label{fig:redistribution}
\end{figure}

\begin{definition}\label{def:sm}
Let $\rho_{ABCR} \in \cS(ABCR)$ be pure and let $QT_AT_B$ be additional spaces. A quantum state redistribution protocol for $A$ with fidelity $1-\delta$ is a pair of CPTP maps $\left(\cE_{ART_{A}\to QR},\mathcal{D}_{CQT_{B}\to A'C}\right)$, the encoder and the decoder, together with $\Phi_{T_{A}T_{B}}\in \cS(T_AT_B)$ such that
\begin{align}
F\Big(\left(\mathcal{I}_{BR}\otimes\mathcal{D}_{CQT_{B}\to A'C}\right)\circ\left(\cE_{ART_{A}\to QR}\otimes\mathcal{I}_{BCT_{B}}\right)\left(\rho_{ABCR}\otimes\Phi_{T_{A}T_{B}}\right),\rho_{A'BCR}\Big)\geq1-\delta\,,
\end{align}
where $\rho_{A'BCR}:=\left(\mathcal{I}_{A\to A'}\otimes\mathcal{I}_{BCR}\right)\rho_{ABCR}$. The number $q:=\log d_Q$ is called quantum communication cost of the protocol.
\end{definition}

This one-shot version of quantum state redistribution was analyzed in~\cite{bertatouchette14,datta14b,jain14}.

\begin{lemma}\cite[Theorem 4]{bertatouchette14}\label{lem:redistribution}
Let $\rho_{ABCR} \in \cS(ABCR)$ pure and $\eps>0$ sufficiently small. Then, there exists a quantum state redistribution protocol for $A$ with fidelity $1-144\eps^2$ and quantum communication cost
\begin{align}
q\leq\,\frac12 \Big( H_{\max}^{\eps}(A|C)_{\rho}-H_{\min}^{\eps}(A|BC)_{\rho} \Big)+\log(4/\eps^4)\,,
\end{align}
where $H_{\min}^{\eps}$ and $H_{\max}^{\eps}$ denote the smooth conditional min- and max-entropy, respectively.
\end{lemma}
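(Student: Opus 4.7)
The plan is to reduce the construction of a state redistribution protocol to a one-shot quantum decoupling problem and then build the decoder via Uhlmann's theorem. This is the standard blueprint going back to the asymptotic ``mother protocol'' and its one-shot incarnations in the smooth-entropy framework.

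First, I would set up the problem by purifying everything in sight. By Stinespring, Alice's encoder $\mathcal{E}_{ART_{A}\to QR}$ can be taken to be an isometry $V_{AT_A \to QW}$ acting trivially on $R$, followed by tracing out a waste register $W$ of dimension $d_W = d_A d_{T_A}/d_Q$. Uhlmann's theorem then guarantees the existence of a decoder $\mathcal{D}_{CQT_B\to A'C}$ achieving the target fidelity $1-144\varepsilon^2$, provided the reduced state on $W$ obtained after applying $V$ to $\rho_{ABCR}\otimes\Phi_{T_AT_B}$ is approximately decoupled from the complementary system $RBC$.

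Second, I would invoke a one-shot decoupling theorem of Dupuis--Berta--Wullschleger--Renner type together with a judicious choice of dimension of the entanglement-assistance register $T_A$. Roughly, picking $\log d_{T_A}$ on the order of $\log d_A - H_{\max}^{\varepsilon}(A|C)_\rho$ (the cost of one-shot state splitting of $A$ from $C$) and requiring decoupling of $W$ from $BC$ (which costs $\log d_W$ bounded in terms of $H_{\min}^{\varepsilon}(A|BC)_\rho$), a Haar-random $V$ achieves the required decoupling on average whenever
\[
\log d_Q \;\geq\; \tfrac{1}{2}\big(H_{\max}^{\varepsilon}(A|C)_\rho - H_{\min}^{\varepsilon}(A|BC)_\rho\big) + \log(4/\varepsilon^4).
\]
Existence of a deterministic $V$ achieving the bound then follows by a standard averaging (Markov) argument, and the Fuchs--van de Graaf inequalities convert the trace-distance decoupling error into the claimed fidelity $1-144\varepsilon^2$; the constant $144=12^2$ reflects the composition of the $\varepsilon$-level approximations (splitting, decoupling, and the Uhlmann step).

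The main obstacle is the second step: making the bookkeeping between the smoothed min- and max-entropies and the dimensions $d_W$, $d_{T_A}$, $d_Q$ work out so that only the \emph{difference} $H_{\max}^{\varepsilon}(A|C) - H_{\min}^{\varepsilon}(A|BC)$ appears, rather than the two terms with unrelated smoothing parameters. This requires either a convex-split-type argument or a two-step splitting/merging decomposition that carefully reuses the shared entanglement, which is the technical heart of~\cite{bertatouchette14}; for our purposes we simply cite their result directly rather than re-derive it.
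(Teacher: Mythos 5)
Your proposal ultimately does exactly what the paper does: the paper's entire ``proof'' of this lemma is to cite~\cite[Theorem 4]{bertatouchette14} and convert its guarantee, stated in purified distance, into the fidelity bound $1-144\varepsilon^2$ (note this conversion is the identity $P=\sqrt{1-F}$ applied to a purified-distance error of $12\varepsilon$, not Fuchs--van de Graaf), and you likewise end by citing that theorem directly rather than re-deriving it. The decoupling/Uhlmann sketch you prepend is a fair account of how the cited result is established, but since both you and the paper defer the technical content to the same citation, the approaches are essentially identical.
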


This follows straightforwardly by reading~\cite[Theorem 4]{bertatouchette14} in terms of the fidelity (instead of the purified distance) and choosing appropriate error parameters. The precise definition of the smooth conditional min- and max-entropy and a discussion of their properties can be found in~\cite{tomamichel09}. However, here we will only need that they both asymptotically converge to the conditional von Neumann entropy. That is, the fully quantum asymptotic equipartition property~\cite{tomamichel08} tells us that for any $\eps\in(0,1)$,\footnote{These inequalities are in fact equalities~\cite[Section 6.4]{mybook}.}
\begin{align}\label{eq:aep}
\lim_{n\to\infty}\frac{1}{n}H_{\max}^{\eps}(A|C)_{\rho^{\otimes n}} \leq H(AC)_\rho-H(C)_\rho\quad\text{as well as}\quad
\lim_{n\to\infty}\frac{1}{n}H_{\min}^{\eps}(A|BC)_{\rho^{\otimes n}}\geq H(ABC)_\rho-H(BC)_\rho\,.
\end{align}
Hence, Lemma~\ref{lem:redistribution} implies the existence of a sequence (indexed by $n \in \mathbb{N}$) of quantum state redistribution protocols for $\rho_{ABC}^{\otimes n}$ with fidelity $\delta_n=1-144\eps_n^2$ converging to one for $n\to\infty$ and an asymptotic quantum communication cost rate $q_n$ satisfying
\begin{align}
\lim_{n\to\infty}\frac{q_n}{n}&\leq\lim_{n\to\infty}\left\{\frac{1}{n}\cdot\frac{1}{2}\Big(H_{\max}^{\eps_n}(A|C)_{\rho^{\otimes n}}-H_{\min}^{\eps_n}(A|BC)_{\rho^{\otimes n}}\Big)+\frac{1}{n}\log(4/\eps_n^4)\right\}\\
&\leq\frac{1}{2}\Big(H(AC)_\rho-H(C)_\rho -H(ABC)_\rho+H(BC)_\rho\Big)\\
&=\frac{1}{2}I(A\!:\!B|C)_{\rho}\,.
\end{align}
Following the ideas of Brand{\~a}o {\it et al.}~\cite{brandao14}, the existence of this protocol with a quantum communication cost rate given by (one-half) the CQMI can than be used to lower bound the CQMI.

\begin{theorem}\label{thm:main}
Let $\rho_{ABC} \in \cS(ABC)$. Then, we have
\begin{align}
I(A\!:\!B|C)_{\rho}\geq-\log F(A;B|C)_{\rho}\,.
\end{align}
\end{theorem}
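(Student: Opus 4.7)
The plan is to lift the one-shot quantum state redistribution result of Lemma~\ref{lem:redistribution} to an asymptotic statement via the multiplicativity of the fidelity of recovery (Proposition~\ref{lem:multi}), following the blueprint of Brand{\~a}o \emph{et al.} For each $n\in\mathbb{N}$, pick $\eps_n\to 0$ slowly enough that $\log(1/\eps_n)=o(n)$, and apply Lemma~\ref{lem:redistribution} to a purification $\rho_{ABCR}^{\otimes n}$ of $\rho_{ABC}^{\otimes n}$ (with $R^n$ purifying) to obtain a redistribution protocol $(\cE_n,\mathcal{D}_n,\Phi_{T_A^n T_B^n})$ with quantum message $Q_n$ of dimension $d_{Q_n}=2^{q_n}$, fidelity $1-144\eps_n^2$, and rate $\limsup_{n\to\infty} q_n/n\leq \tfrac12 I(A\!:\!B|C)_\rho$ (as already computed right below~\eqref{eq:aep}).

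The main step is to extract from this protocol a local recovery map $\Gamma^{(n)}_{C^n\to A^n C^n}$ defined by
\begin{align}
\Gamma^{(n)}(\sigma_{C^n}) := \mathcal{D}_n\bigl(\sigma_{C^n}\otimes\tau_{Q_n}\otimes\tr_{T_A^n}[\Phi_{T_A^n T_B^n}]\bigr),
\end{align}
where $\tau_{Q_n}=1_{Q_n}/d_{Q_n}$; this is manifestly a CPTP map. To bound its performance I would introduce an intermediate scheme in which Alice runs $\cE_n$ and teleports $Q_n$ to Bob using a fresh maximally entangled pair, but the $2q_n$ classical bits of the Bell-measurement outcome $k$ are lost; Bob guesses some $k'$ uniformly at random, applies the Pauli $\sigma_{k'}$, and then runs $\mathcal{D}_n$. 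The Pauli-twirl identity $\tfrac{1}{d_Q^2}\sum_k\sigma_k\omega\sigma_k^\dagger=\tr[\omega]\tau_Q$, applied to the averaging over $k$ and $k'$, decorrelates Bob's teleported register from all other systems, so the marginal input to $\mathcal{D}_n$ becomes exactly $\rho_{BC}^{\otimes n}\otimes\tau_{Q_n}\otimes\tr_{T_A^n}[\Phi_{T_A^n T_B^n}]$. Hence the $A^n B^n C^n$-output of the intermediate scheme coincides with $(\mathcal{I}_{B^n}\otimes\Gamma^{(n)})(\rho_{BC}^{\otimes n})$. On the other hand, conditioning on $k=k'$ (probability $2^{-2q_n}$) recovers the genuine redistribution protocol of Lemma~\ref{lem:redistribution}, whose output on $A^n B^n C^n R^n$ is $(1-144\eps_n^2)$-close to $\rho_{ABCR}^{\otimes n}$. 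Joint concavity of Uhlmann's fidelity together with monotonicity under $\tr_{R^n}$ thus yield
\begin{align}
F(A^n;B^n|C^n)_{\rho^{\otimes n}}\geq 2^{-2q_n}\bigl(1-144\eps_n^2\bigr).
\end{align}

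Combined with multiplicativity $F(A^n;B^n|C^n)_{\rho^{\otimes n}} = F(A;B|C)_\rho^{\,n}$ from Proposition~\ref{lem:multi}, taking $-\log$, dividing by $n$, and letting $n\to\infty$ yields $-\log F(A;B|C)_\rho\leq I(A\!:\!B|C)_\rho$, which is the theorem. The main obstacle is the Pauli-twirl identification of the intermediate scheme with $\Gamma^{(n)}$: one must verify that the double average over the teleportation outcome $k$ and Bob's guess $k'$ simultaneously (i) erases all correlations between Bob's teleported register and the remaining registers (so that $\Gamma^{(n)}$ is indeed the right CPTP map), and (ii) preserves a single ``$k=k'$'' branch of weight $2^{-2q_n}$ that faithfully reproduces the original protocol's fidelity guarantee after tracing out $R^n$.
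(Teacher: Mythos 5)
Your proposal is correct in substance and shares the paper's skeleton: extract the recovery map $\Gamma^{(n)}(\cdot)=\mathcal{D}_n\bigl((\cdot)\otimes 1_{Q_n}/d_{Q_n}\otimes\Phi_{T_B^n}\bigr)$ from the decoder of Lemma~\ref{lem:redistribution}, establish $F(A^n;B^n|C^n)_{\rho^{\otimes n}}\geq 2^{-2q_n}\,(1-144\eps_n^2)$, then conclude via multiplicativity (Proposition~\ref{lem:multi}) and the asymptotic equipartition property~\eqref{eq:aep}. The one step you treat differently is the central fidelity bound. The paper derives it abstractly: $\sigma_{B^nC^nQ_nT_B^n}\leq d_{Q_n}^2\cdot\rho_{BC}^{\otimes n}\otimes(1_{Q_n}/d_{Q_n})\otimes(1_{T_B^n}/d_{T_B^n})$ from the standard inequality $\omega_{XY}\leq d_X\cdot 1_X\otimes\omega_Y$, after which positivity of the decoder and operator monotonicity of the square root turn the operator inequality into the fidelity inequality~\eqref{eq:fid-bound}. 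Your teleportation-with-lost-message construction is an operational rendering of exactly the same fact: the Heisenberg--Weyl twirl $\frac{1}{d^2}\sum_k (W_k\otimes 1_E)\,\omega_{QE}\,(W_k^\dagger\otimes 1_E)=\tfrac{1}{d}1_Q\otimes\omega_E$ is a constructive decomposition of the decoupled state into $d^2$ positive branches, of which the $k=k'$ branches (total weight $2^{-2q_n}$) reproduce the genuine protocol output exactly (teleportation is perfect conditioned on any outcome once the matching correction is applied); discarding the off-diagonal branches amounts to invoking the paper's operator inequality. So the two arguments are mathematically equivalent; yours costs extra bookkeeping but makes the $2^{-2q_n}$ loss operationally transparent.

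One step needs repair. You invoke ``joint concavity of Uhlmann's fidelity,'' but the squared fidelity $F$ is not jointly concave; only the root fidelity $\sqrt F$ is. If you apply concavity of $\sqrt F$ to the mixture, the diagonal branches give $\sqrt{F}\geq 2^{-2q_n}\sqrt{1-144\eps_n^2}$, hence only $F\geq 2^{-4q_n}(1-144\eps_n^2)$, which after taking limits yields merely $I(A\!:\!B|C)_\rho\geq-\tfrac12\log F(A;B|C)_\rho$ --- strictly weaker than the theorem. What you actually need (and what is true) is concavity of $F$ in its second argument with the first held fixed: this follows from Alberti's formula~\eqref{eq:alberti_original}, which expresses $F(\rho,\cdot)$ as a minimum of functions linear in the second argument; equivalently, take the paper's route and note that the mixture operator-dominates $2^{-2q_n}$ times the true protocol output, so monotonicity of $F(\rho,\cdot)$ under the operator order (operator monotonicity of the square root again) gives the factor $2^{-2q_n}$ rather than $2^{-4q_n}$. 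With that substitution your claimed bound stands. A minor further point: choosing $\eps_n\to 0$ with $\log(1/\eps_n)=o(n)$ requires an AEP with explicit $\eps$-dependence (a second-order statement), not~\eqref{eq:aep} as quoted, which holds for fixed $\eps$; it is cleaner to keep $\eps$ fixed, divide by $n$, and let $n\to\infty$ as the paper's proof does, since every $\eps$-dependent correction is then $O(\log(1/\eps))=o(n)$ automatically.
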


\begin{proof}
The first part of the proof is the same as in~\cite{brandao14}. Let $\rho_{ABCR}$ be a purification of $\rho_{ABC}$ and consider any quantum state redistribution protocol for $A$ with encoder $\cE_{ART_{A}\to QR}$, decoder $\mathcal{D}_{CQT_{B}\to A'C}$, and define
\begin{align}
\sigma_{BCQRT_{B}}:=\left(\cE_{ART_{A}\to QR}\otimes\mathcal{I}_{BCT_{B}}\right)\left(\rho_{ABCR}\otimes\Phi_{T_{A}T_{B}}\right).
\end{align}
The authors of~\cite{brandao14} observe that the existence of a protocol with high fidelity is sufficient to give some bounds on the performance of the decoder even if the quantum communication is omitted. This is a consequence of the standard operator inequality $d_X\cdot1_{X}\otimes\omega_{Y}\geq\omega_{XY}$. In our case,
\begin{align}
\sigma_{BCQT_{B}} &\leq d_Q \cdot \sigma_{BCT_B}\otimes 1_Q = \rho_{BC}\otimes\left(d_Q^2\cdot\frac{1_Q}{d_{Q}}\right)\otimes\frac{1_{T_{B}}}{d_{T_{B}}}\,,
\end{align}
which again implies
\begin{align}
d_Q^2\cdot\mathcal{D}_{CQT_{B}\to AC}\left(\rho_{BC}\otimes\frac{1_Q}{d_{Q}}\otimes\frac{1_{T_{B}}}{d_{T_{B}}}\right)\geq\mathcal{D}_{CQT_{B}\to A'C}(\sigma_{BCQT_{B}})\,.
\end{align}
Now the following second part of the proof is different from~\cite{brandao14}. Due to the operator monotonicity of the square root function we get
\begin{align}\label{eq:fid-bound}
d_Q^2\cdot F\left(\rho_{A'BC},\mathcal{D}_{CQT_{B}\to A'C}\left(\rho_{BC}\otimes\frac{1_Q}{d_{Q}}\otimes\frac{1_{T_{B}}}{d_{T_{B}}}\right)\right)\geq F\left(\rho_{A'BC},\mathcal{D}_{CQT_{B}\to A'C}(\sigma_{BCQT_{B}})\right)\,.
\end{align}
By the protocol for quantum state redistribution (Lemma~\ref{lem:redistribution}), we find that for all $\eps>0$ small enough there exists a protocol such that
\begin{align}\label{eq:fid-bound2}
F\left(\rho_{A'BC},\mathcal{D}_{CQT_{B}\to A'C}(\sigma_{BCQT_{B}})\right)\geq1-144\eps^2
\end{align}
for a quantum communication cost
\begin{align}
\log d_Q \leq &\,\frac12 \Big( H_{\max}^{\eps}(A|C)_{\rho}-H_{\min}^{\eps}(A|BC)_{\rho} \Big)+\log(4/\eps^4)\,.
\end{align}
Hence, we can use~\eqref{eq:fid-bound} and~\eqref{eq:fid-bound2} to estimate
\begin{align}
H_{\max}^{\eps}(A|C)_{\rho}-H_{\min}^{\eps}(A|BC)_{\rho}&\geq -\log F\left(\rho_{A'BC},\mathcal{D}_{CQT_{B}\to A'C}\left(\rho_{BC}\otimes\frac{1_Q}{d_{Q}}\otimes\frac{1_{T_{B}}}{d_{T_{B}}}\right)\right)-\log(16/\eps^8)+\log(1-144\eps^2)\\
&\geq-\log\max_{\Gamma_{C\to A'C}}F\left(\rho_{A'BC},\Gamma_{C\to A'C}(\rho_{BC})\right)-\log(16/\eps^8)+\log(1-144\eps^2)\,,
\end{align}
and find that
\begin{align}
H_{\max}^{\eps}(A|C)_{\rho}-H_{\min}^{\eps}(A|BC)_{\rho}\geq -\log F(A;B|C)_{\rho} - O(\log(1/\eps))\,.
\end{align}
By applying this bound to $\rho_{ABC}^{\otimes n}$, multiplying the resulting inequality by $1/n$, and letting $n\to\infty$ we find by the asymptotic equipartition property for the smooth conditional min- and max-entropy~\eqref{eq:aep}, and the multiplicativity of the fidelity of recovery (Proposition~\ref{lem:multi}), that
\begin{align}
I(A\!:\!B|C)_{\rho}\geq-\log F(A;B|C)_{\rho}\,.
\end{align}
\end{proof}


\section{Conclusion}

In this paper we have generalized the FoR and have shown that the resulting measure is multiplicative for product states (Proposition~\ref{lem:multi}). It would be interesting to explore the consequences of this more for analyzing quantum correlations~\cite{seshadreesan14}, or for possible applications in computer science~\cite{braverman12,brandao13b,brandao13c,touchette14}. From the multiplicativity we also deduced an information theoretic proof of the Fawzi-Renner lower bound on the conditional quantum mutual information without making use of de Finetti reductions (Theorem~\ref{thm:main}). We note that Brand{\~a}o {\it et al.}~\cite{brandao14} also show a potentially stronger lower bound in terms of a regularized relative entropy distance,
\begin{align}
I(A\!:\!B|C)\geq \lim_{n\to\infty} \frac{1}{n} \min_{\Gamma_{C^n \to A^nC^n}} D\big(\rho_{ABC}^{\otimes n} \big\| \Gamma_{C^n \to A^nC^n} (\rho_{BC}^{\otimes n} ) \big) \,.
\end{align}
They then use de Finetti reductions to get rid of the regularization and arrive at a bound in terms of a measured relative entropy distance. On the other hand, our Proposition~\ref{lem:multi} rephrased in terms of the (sandwiched) quantum R\'enyi divergence of order $\frac12$~\cite{lennert13,wilde13} reads
\begin{align}
\min_{\Gamma_{C^n \to A^nC^n}} D_{\frac12} \big(\rho_{ABC}^{\otimes n} \big\| \Gamma_{C^n \to A^nC^n} (\rho_{BC}^{\otimes n} ) \big)= n\cdot\min_{\Gamma_{C \to AC}} D_{\frac12} (\rho_{ABC} \| \Gamma_{C \to AC} (\rho_{BC}) ) \,.
\end{align}
And hence the Fawzi-Renner bound then follows from the monotonicity of the quantum R\'enyi divergence in $\alpha$. Moreover, if such an additivity property would hold for any $\alpha \in (\frac12, 1]$ we could find stronger bounds. In particular, Li and Winter~\cite{liwinter14} have asked about a bound in terms of the relative entropy distance ($\alpha=1$). The corresponding problems can then no longer be phrased as SDPs but become complex optimization programs (for which duality is also available in principle~\cite{boyd04}).\footnote{For studying additivity properties one could also try to adapt the techniques of~\cite{hayashitomamichel14}.} Another interesting question is if we can estimate the performance of the optimal map $\Lambda_{C\to AC}$ in~\eqref{eq:fidrec} with the Petz recovery map~\cite{ohya93},
\begin{align}\label{eq:petz_map}
\Gamma_{C\to AC}^{\mathrm{Petz}}(\cdot):=\rho_{AC}^{1/2}\rho_{C}^{-1/2}(\cdot)\rho_{C}^{-1/2}\rho_{AC}^{1/2}\,,
\end{align}
in the sense that the Petz map should perform nearly as good as the optimal map. This in analogy to what Barnum and Knill have shown for the special case of pure states, i.e.~for the quantity~\eqref{eq:min_pure}. Finally, the results of Fawzi and Renner were recently generalized to the multiparty setting~\cite{wilde14} as well as to a new lower bound on the monotonicity of the quantum relative entropy under CPTP maps~\cite{berta14b}. It should be insightful to study SDP techniques in these more general settings as well. We also note that Piani~\cite{piani15} recently established a family of lower bounds on quantum discord using~\eqref{eq:fawzirenner}--\eqref{eq:fidrec} and SDPs similar to our primal problem in~\eqref{eq:primal}.

\textbf{Note added:} After completion of this work, improved lower bounds on the CQMI and on the monotonicity of the quantum relative entropy under CPTP maps were proven~\cite{renner15,wilde15,sutter15,junge15}. Moreover, using complex optimization duality the main result of this paper was extended to an additivity result for more general relative entropies of recovery~\cite{berta15}. It was also shown that the Petz recovery map~\eqref{eq:petz_map} does not perform square-root optimal compared to the optimal recovery map~\cite[Appendix F]{renner15} (even though some type of near optimality might still hold). Lastly it was pointed out very recently that there is a quantum interactive proof system whose maximum acceptance probability is equal to the fidelity of recovery~\cite{cooney15}, and hence the multiplicativity thereof also follows by a result of Kitaev and Watrous~\cite{kitaev00}.


\paragraph*{Acknowledgments.} We thank Fernando Brand{\~a}o, Omar Fawzi, Volkher Scholz, David Sutter, and Mark Wilde for discussions and feedback. MB acknowledges funding provided by the Institute for Quantum Information and Matter, an NSF Physics Frontiers Center (NFS Grant PHY-1125565) with support of the Gordon and Betty Moore Foundation (GBMF-12500028). Additional funding support was provided by the ARO grant for Research on Quantum Algorithms at the IQIM (W911NF-12-1-0521). MT is funded by an University of Sydney Postdoctoral Fellowship and acknowledges support from the ARC Centre of Excellence for Engineered Quantum Systems (EQUS). MB thanks the University of Sydney for hosting him while part of this work was done.


\bibliographystyle{arxiv_no_month}
\bibliography{library}

\end{document}